\theoremstyle{plain}
\newtheorem{Lemma}{Lemma}
\newcommand{\cosimo}[1]{\textcolor{black}{#1}}
\newcommand{\fra}[1]{\textcolor{black}{#1}}
\title{\LARGE \bf \cosimo{Covid-19 and Flattening the Curve: a Feedback Control Perspective}}
\author{Francesco Di Lauro$^{1}$, Istv\'an Zolt\'an Kiss$^{1}$, 
Daniela Rus$^{2}$, Cosimo Della Santina$^{3}$ %
\thanks{$^{1}$ F. Di Lauro and I.Z. Kiss are with Department of Mathematics, University of Sussex, Falmer, Brighton BN1 9QH, UK, and  acknowledge support from the Leverhulme Trust for the Research Project Grant RPG2017-370.
$^{2}$D. Rus is with the MIT Computer Science and Artificial Intelligence Laboratory, Massachusetts Institute of Technology, Cambridge, MA, USA.
$^{3}$C. Della Santina is with the Cognitive Robotics Department, Delft University of Technology, 2628 CD Delft, The Netherlands, and with the Institute of Robotics and Mechatronics, German Aerospace Center (DLR), Oberpfaffenhofen, Germany. Contacts {\tt\footnotesize cosimodellasantina@gmail.com}.}
}
\begin{document}

\maketitle

\begin{abstract}
%
\cosimo{
Many of the control policies that were put into place during  the Covid-19 pandemic had a common goal: to flatten the curve of the number of infected people so that its peak remains under a critical threshold. 
}
\cosimo{This letter considers the challenge of engineering a strategy that enforces such a goal using control theory.}
%
%
\cosimo{
    We introduce a simple formulation of the optimal flattening problem, and provide a closed form solution.
}
%
%
\cosimo{
    This is augmented through nonlinear closed loop tracking of  the nominal solution, with the aim of ensuring close\--to\--optimal performance under uncertain conditions.
}
%
%
\cosimo{A key contribution of this paper is to provide validation of the method with} extensive and realistic simulations in a Covid-19 scenario, with particular focus on the case of Codogno - a small city in Northern Italy that has been among the most harshly hit by the pandemic.
%
%
\end{abstract}

\section{Introduction}\label{sec:introduction}

%
%
\cosimo{Defining  and  implementing  social  distancing  protocols (SD)   is   a significant   challenge   with   economical, political, and scientific considerations. The definition of a clear or optimal goal remains unclear.}
\cosimo{As an example, 
consider the direct reduction of deaths by Covid-19. 
Imposing this goal requires the harshest measures possible, for an indefinite period of time.
According to the available models \cite{Kiss2017} a monotonic relationship exists between this cost function and the SD level.
}
\cosimo{Yet, this strategy has many potential drawbacks.
First,} extreme levels of lockdown are unsustainable in the long run, due to the vast range of pernicious secondary effects (e.g. poverty \cite{goolsbee2020fear}, mental illnesses \cite{bhuiyan2020covid}) 
 \cosimo{which in turn are themselves associated with a rise in mortality. 
}
%
%
%
\cosimo{
Additionally, relaxing or lifting control after a harsh lockdown may lead to a second wave, possibly more critical than the first one \cite{xu2020beware}.
%
}
%
%
%
\cosimo{Another strategy would be to let the epidemic spread freely (red curve in Fig. \ref{fig:actual_cover}) to get herd immunity as fast as possible. 
%
%
This is also hardly acceptable, as it would lead to higher mortality~\cite{armstrong2020outcomes}, and to a prolonged stress of the health care system.
%
}
%
%
\cosimo{
The ``\textit{flattening the curve}'' strategy provides a third option, which promises to combine the benefits of the two extremes \cite{ferguson2020report,thunstrom2020benefits}.
The key idea (of which  Fig. \ref{fig:actual_cover} provides a visual representation) is to allow some level of disease spreading, while ensuring that people seeking medical assistance can access the health care system. 
%
%
%
%
}
%
%

%
%
\begin{figure}
	\centering
	\includegraphics[width = .9\columnwidth]{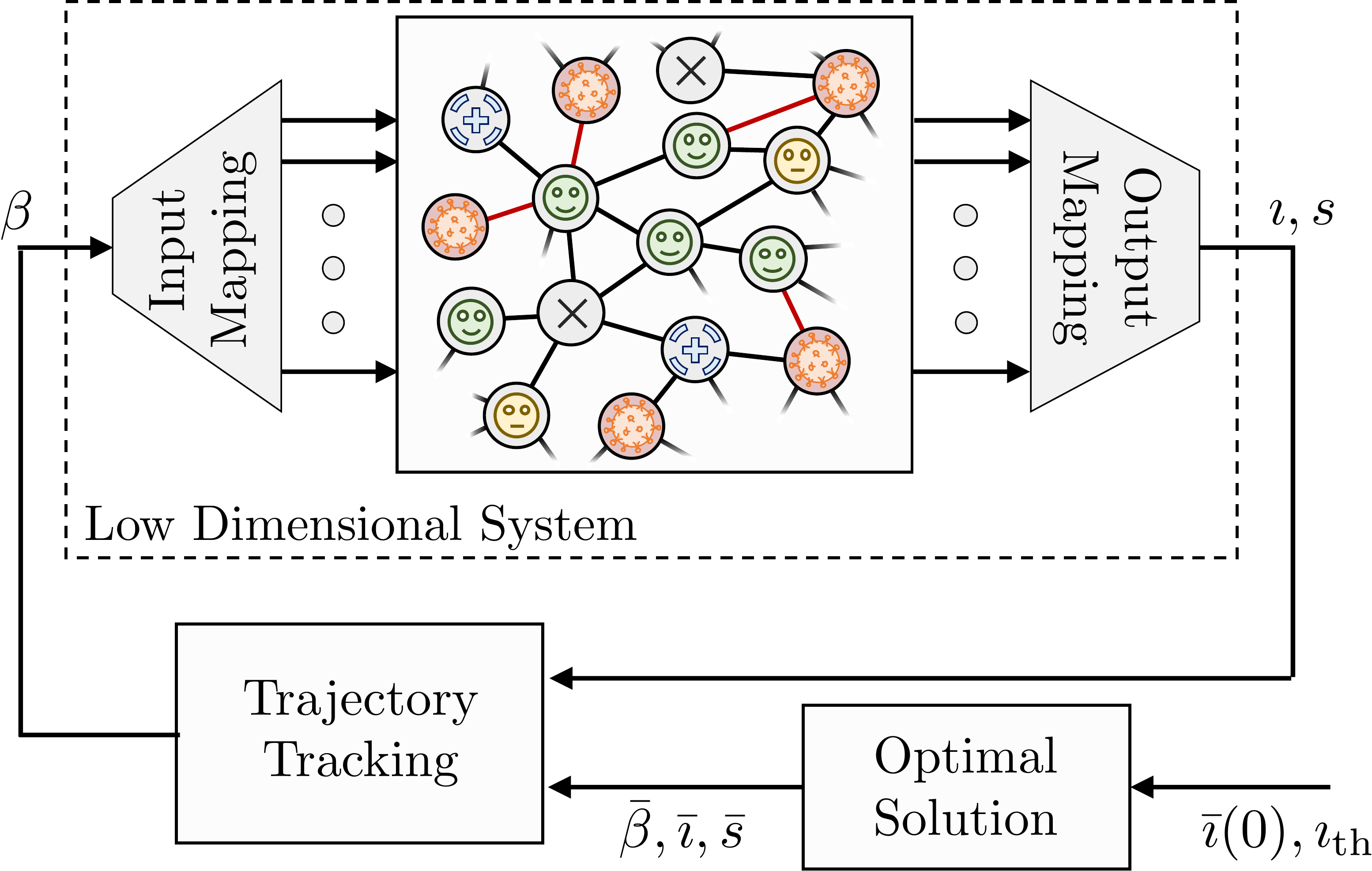}
	\vspace{-0.1cm}
	\caption{Block diagram of the strategy proposed in this paper. 
		The input and output maps reduce the  high\--dimensional dynamics of the outbreak to a simpler evolution of few salient characteristics, namely the prevalence of infected and susceptible $\imath, s$, which are  sensible to  changes in the level of SD, modelled here as different values of the transmission rate of infection $\beta$.
		%
		%
		A nonlinear feedback controller acts within this representation implementing trajectory tracking of an optimal control policy. 
	}
	\label{fig:cover}
\end{figure}
%
%
%
%
%
%
%
%
\cosimo{A vast pre-Covid-19 pandemic literature~\cite{nowzari2016analysis} on designing controllers for dealing with epidemics already exists. However, none of these works tackled the curve flattening goal, since no pandemics before threatened to overburden the healthcare system on such a large scale.
}
%
%
\cosimo{In the context of Covid-19, open loop optimal control is proposed in \cite{di2020timing} for selecting the optimal timing of a time-limited lockdown, and in \cite{djidjou2020optimal} the authors find a trade-off between number of deaths and damage to the economy. 
Yet, feed-forward strategies are quite prone to uncertainties naturally associated with epidemics~\cite{di2020impact}. 
}
\cosimo{More robust strategies have been proposed, relying on feedback control.}
A linear controller is proposed in \cite{giordano2020modelling}. A fast switching strategy with duty cycle selected through a slow feedback is discussed in \cite{bin2020fast}.
In \cite{kohler2020robust}, the loop is closed by periodically re-planning the optimal action, in a model-predictive-control fashion.
%
%
\cosimo{
An explicit formulation of curve flattening is instead provided
in \cite{morris2020optimal}, where an open loop strategy is devised so to optimally reduce the infectious peak. 
An interesting alternative is discussed in \cite{charpentier2020covid}, where a trade-off between the health care and the socio-economic cost of the pandemic is proposed, and the limited capacity level of
intensive care units is imposed as a constraint.
%
%
Both these solutions are open loop.
}
%

\cosimo{
This letter investigates the use of feedback control theory as a tool for engineering an effective curve flattening strategy. We wish to design a simple rule that can be implemented on a local level, without the need of accessing specialized facilities to run complex optimization routines.  
} 
%
%
%
%
%
We perform extensive simulations of epidemics on  networks~\cite{pastor2015epidemic,Kiss2017}, with conditions inspired by real Covid-19 scenarios. 
\cosimo{This is as far as we know the first time that such analysis is carried out for Covid-19 control related research. \fra{We remark that the acceptable level of ``curve flattening'' is to be decided by policy makers, based upon  cost-benefit analysis. However, once an optimal curve has been identified, this letter offers a novel, theoretically-backed strategy that guarantees that the goal of controlling the epidemic curve is achieved.}
}
%

%
%
%
\section{Background: Model of the Epidemics with Dynamic Interventions}\label{Sec:background}
Consider a fixed population of $N$ individuals, and a disease spreading among them, through direct contacts. Each individual can be in either of three states: (i) susceptible, meaning that they can be infected by the pathogen; (ii) infected, meaning that they contracted the pathogen and they can now infect other susceptible  people; (iii) recovered -and therefore immune, or removed. We denote with $S(t), I(t), R(t)$ the number of people at time $t$ who are susceptible, infected or recovered, respectively. We have that  $S(t)+I(t)+R(t) = N$. We can therefore neglect the study of $R$, as its value can always be recovered from $S,I$ and $N$.
If the population is well mixed
, the evolution of the disease can be described by the SIR model 
\begin{equation}\label{eq:full_dynamics}
    \dot{s}(t) = - \beta \imath(t) s(t), \quad
    \dot{\imath}(t) = + \beta \imath(t) s(t) - \gamma \imath(t),
\end{equation}
where $s(t)$ and $\imath(t)$ are the system state, indicating respectively the number of susceptible $S(t)$ and infectious $I(t)$, divided by the total population $N$. 
\cosimo{Note that, despite its simplicity, the SIR model has proven able to match real data when applied to Covid-19~\cite{Saad-Royeabd7343,morris2020optimal,thunstrom2020benefits}, and it is therefore widely used in the literature.}
Without loss of generality, we consider that, at $t = 0$, $s + \imath = 1$.
The constant $\gamma \geq 0$ defines the transition rate from the pool of infected, to the compartment of recovered/removed.
$\beta$ is the rate at which an infected individual makes disease-transmitting contacts with other people. 
When SD policies are imposed, the value of $\beta$ varies, \cosimo{$0<\beta_{\mathrm{min}} \leq \beta \leq \beta_{\mathrm{max}}$}, with $\beta_{\mathrm{min}}$ corresponding to total lockdown.
Therefore $\beta$ is the control input of \eqref{eq:full_dynamics}.

%

\section{Control Strategy}
We propose here a control strategy acting on system \eqref{eq:full_dynamics}. As shown by Fig. \ref{fig:cover}, this architecture is made of two components: (i) an optimal open loop action, and (ii) a feedback controller implementing trajectory tracking. 
%

%
%
%
\subsection{Optimal curve flattening under nominal conditions}\label{sec:optimal}


%
\begin{figure}
	\centering
	\includegraphics[width = .9\columnwidth]{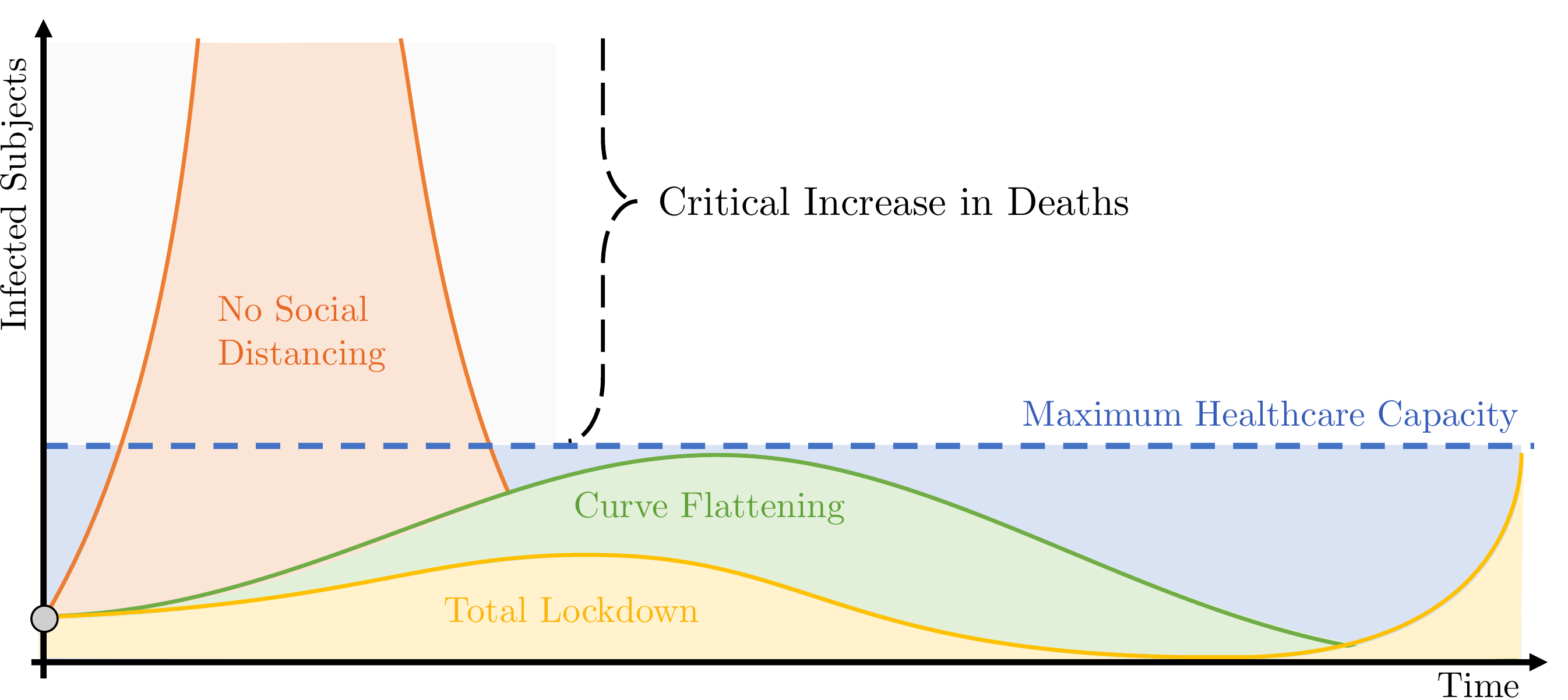}
	\vspace{-0.25cm}
	\caption{
		\cosimo{The aim of this work is to devise a control strategy that achieves the curve flattening goal,  which should result in a curve similar to the green one. 
			The two alternative extreme cases are shown as comparison: the result of no SD is shown in red, and of full lockdown in yellow. }}
	\label{fig:actual_cover}
\end{figure}

%

Our aim here is to introduce a nominal strategy (``Optimal Solution'' in Fig. \ref{fig:cover}) for optimally flattening the epidemic curve $\imath(t)$, so to keep the number of infected people $\imath$ within the maximum capacity of the health care system, $\imath_{th} > 0$.
\cosimo{This can, for instance, be evaluated by considering the percentage of people that will need Intensive Care Units (ICUs), which are probably the most critically limited resources.}
\cosimo{As discussed in the introduction,}
enforcing this constraint is of paramount importance, since exceeding it may provoke a critical failure of the healthcare system, leading to a substantial increase in the number of deaths not only from the disease, but also from uncorrelated health issues.
On the other hand, we want to keep the level of restriction on the population as low as possible, to minimise secondary negative effects. 
%
%
\cosimo{
Note that the curve flatting goal is the result of a careful balance between competing interests, and as such we decide to explicitly impose it as a goal. 
}
We consider the case of a constant $\beta$. 
This simplification is instrumental in making the optimal control problem more manageable. 

We summarize the above considerations through the optimization problem
\begin{equation}\small\label{eq:optimal}
    \max_{\beta \in \mathbb{R}}        \; \beta, \quad
    \text{s.t.} \;\; 0 < \imath(t) \leq \imath_{\mathrm{th}} \; \; \forall t\; \; \text{and} \; \; \eqref{eq:full_dynamics}. 
\end{equation}
%
%
We now propose a Lemma introducing a general solution to this optimal control problem. 

\begin{Lemma}
    The following is the closed form solution of \eqref{eq:optimal}
    \begin{equation}\small\label{eq:closed_form_optimal}
        \beta = -\frac{\gamma}{1 - \imath_{\mathrm{th}} }W_{-1}\left(-\frac{1}{e}\frac{1 - \imath_{\mathrm{th}}}{1 - \imath(0)}\right),
    \end{equation}
    where $W_{-1}$ is the Lambert W function \cite{corless1996lambertw}, branch $-1$.
\end{Lemma}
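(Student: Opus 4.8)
The plan is to exploit the well-known single-peak structure of the SIR trajectories to reduce the infinite family of pointwise constraints $\imath(t) \le \imath_{\mathrm{th}}$ to a single scalar condition on the peak height, and then to invert the resulting transcendental relation. First I would establish the first integral of \eqref{eq:full_dynamics}: dividing the two equations gives $\frac{d\imath}{ds} = -1 + \frac{\gamma}{\beta s}$, which integrates to $\imath(t) = 1 - s(t) + \frac{\gamma}{\beta}\ln\frac{s(t)}{s(0)}$ after using $s(0)+\imath(0)=1$. Since $\dot{\imath} = \imath\,(\beta s - \gamma)$ and $s$ is strictly decreasing along trajectories, $\imath$ grows while $s > \gamma/\beta$ and decays afterwards; hence the curve has a unique maximum, attained at $s_{\mathrm{peak}} = \gamma/\beta$ whenever $\beta s(0) > \gamma$ (otherwise $\imath$ is monotonically decreasing and the bound is slack). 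Substituting $s_{\mathrm{peak}}$ into the first integral yields the closed form
\[
\imath_{\max}(\beta) = 1 - \frac{\gamma}{\beta} + \frac{\gamma}{\beta}\ln\frac{\gamma}{\beta\,(1-\imath(0))} .
\]

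Next I would show that the constraint is active at the optimizer. Differentiating gives $\frac{d\imath_{\max}}{d\beta} = \frac{\gamma}{\beta^2}\ln\frac{\beta\,(1-\imath(0))}{\gamma}$, which is strictly positive exactly in the relevant regime $\beta\,(1-\imath(0)) > \gamma$, so $\imath_{\max}$ is strictly increasing in $\beta$ there. Consequently the feasible set of \eqref{eq:optimal} is an interval of the form $\{\beta \le \beta^\ast\}$, maximizing $\beta$ forces equality $\imath_{\max}(\beta) = \imath_{\mathrm{th}}$, and the problem reduces to solving this single equation for $\beta$.

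Finally I would invert $\imath_{\mathrm{th}} = 1 - \frac{\gamma}{\beta} + \frac{\gamma}{\beta}\ln\frac{\gamma}{\beta\,(1-\imath(0))}$. Writing $a = 1-\imath(0)$, $b = 1-\imath_{\mathrm{th}}$ and $w = \frac{\beta b}{\gamma}$, the equation rearranges to $w\,e^{1-w} = b/a$, equivalently $(-w)\,e^{-w} = -\frac{1}{e}\frac{b}{a}$. By the definition of the Lambert $W$ function this reads $-w = W\!\left(-\frac{1}{e}\frac{b}{a}\right)$, and solving for $\beta$ reproduces \eqref{eq:closed_form_optimal}. I expect the branch selection to be the main obstacle: since $\imath_{\mathrm{th}} > \imath(0)$ the argument $-\frac{1}{e}\frac{b}{a}$ lies in $(-1/e,0)$, so both real branches $W_0$ and $W_{-1}$ are admissible and correspond to the two roots of $w\,e^{1-w}=b/a$ on opposite sides of $w=1$. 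The larger root $w>1$ is the one for which $s_{\mathrm{peak}}=\gamma/\beta < s(0)$, i.e. the genuine epidemic-growth regime in which the peak formula of the first step actually describes the trajectory maximum and the bound binds; the smaller root $w<1$ falls where $\imath$ never exceeds $\imath(0)$ and the formula is spurious. Retaining the larger root — equivalently, the one delivering the largest feasible $\beta$ — therefore forces the branch $W_{-1}$, whose range is $(-\infty,-1]$, which completes the argument.
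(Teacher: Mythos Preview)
Your proposal is correct and follows essentially the same route as the paper: derive the SIR first integral, evaluate it at $s=\gamma/\beta$ to get the peak height, set the peak equal to $\imath_{\mathrm{th}}$, and invert via the Lambert $W$ function. Your treatment is in fact slightly more careful than the paper's—the explicit monotonicity argument showing the feasible set is an interval, and the observation that the $W_0$ root corresponds to the non-epidemic regime $s_{\mathrm{peak}}>s(0)$ where the peak formula is vacuous, fill small gaps that the paper handles only informally (it simply notes $W_0>W_{-1}$ and picks the branch yielding the larger $\beta$).
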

\begin{proof}
    Since the cost function is linear in the optimization parameter, the optimal value is to be found on the boundary of the feasible set, i.e. $\beta$ has to be such that $\max_t \imath(t) = \imath_{th}$.
    
    The maximum value of $\imath$ is given by the non-trivial solution of $\dot{\imath}(t) = 0$. Combining this condition with the second equation in \eqref{eq:full_dynamics} yields $\displaystyle{{s}^+=\gamma/\beta}$. Further, we can combine the first two lines of~\eqref{eq:full_dynamics} into 
    ${\mathrm{d}\imath}/{\mathrm{d}s} = {\gamma}/{(\beta s)} -1$. 
%
This nonlinear ordinary differential equation can be solved together with the initial condition $s(0) = 1 - \imath(0), \imath(0)$, to get
\begin{equation}\small
\imath(s) = \frac{\gamma}{\beta} \ln\left(\frac{s}{1-\imath(0)}\right) - s + 1.
\label{eq:betanumerical}
\end{equation}
By inverting $\imath(s^+)$ for $\beta$, we get the desired optimal value such that $\max_{t}\imath(t) = \imath_{th}$. The following is a solution for all integer values of $j$,
\begin{equation}\small
        \beta = -\frac{\gamma}{1 - \imath_{\mathrm{th}} }W_{j}\left(-\frac{1}{e}\frac{1 - \imath_{\mathrm{th}}}{1 - \imath(0)}\right),
\end{equation}
\cosimo{where $W_j(\bar{a})$ is the $j\--$th branch of the Lambert W function \cite{corless1996lambertw}. Each of the branches is built as the solution of $\bar{a} = W_j e^{W_j}$.
Among all of them,} only $W_{-1}, W_{0}$ have domain within the real line. Moreover, it is always the case that $W_{0} > W_{-1}$, which in turn assures that the larger value of $\beta$ 
is always reached for $j = -1$, concluding the proof.

\end{proof}

It is worth noting that the argument of  $W_{-1}$ is always between $-1/e$ and $0$ since $0 \ \imath(0) \leq \imath_{\mathrm{th}}$. This is exactly the range of arguments for which the $-1$ branch of the Lambert function is well defined \cite{corless1996lambertw}.
%

\subsection{Trajectory tracking controller}

\begin{figure}
    \centering
    \subfigure[Infected $\imath$]{\includegraphics[height = 0.5\columnwidth,trim={1 0 30 10},clip]{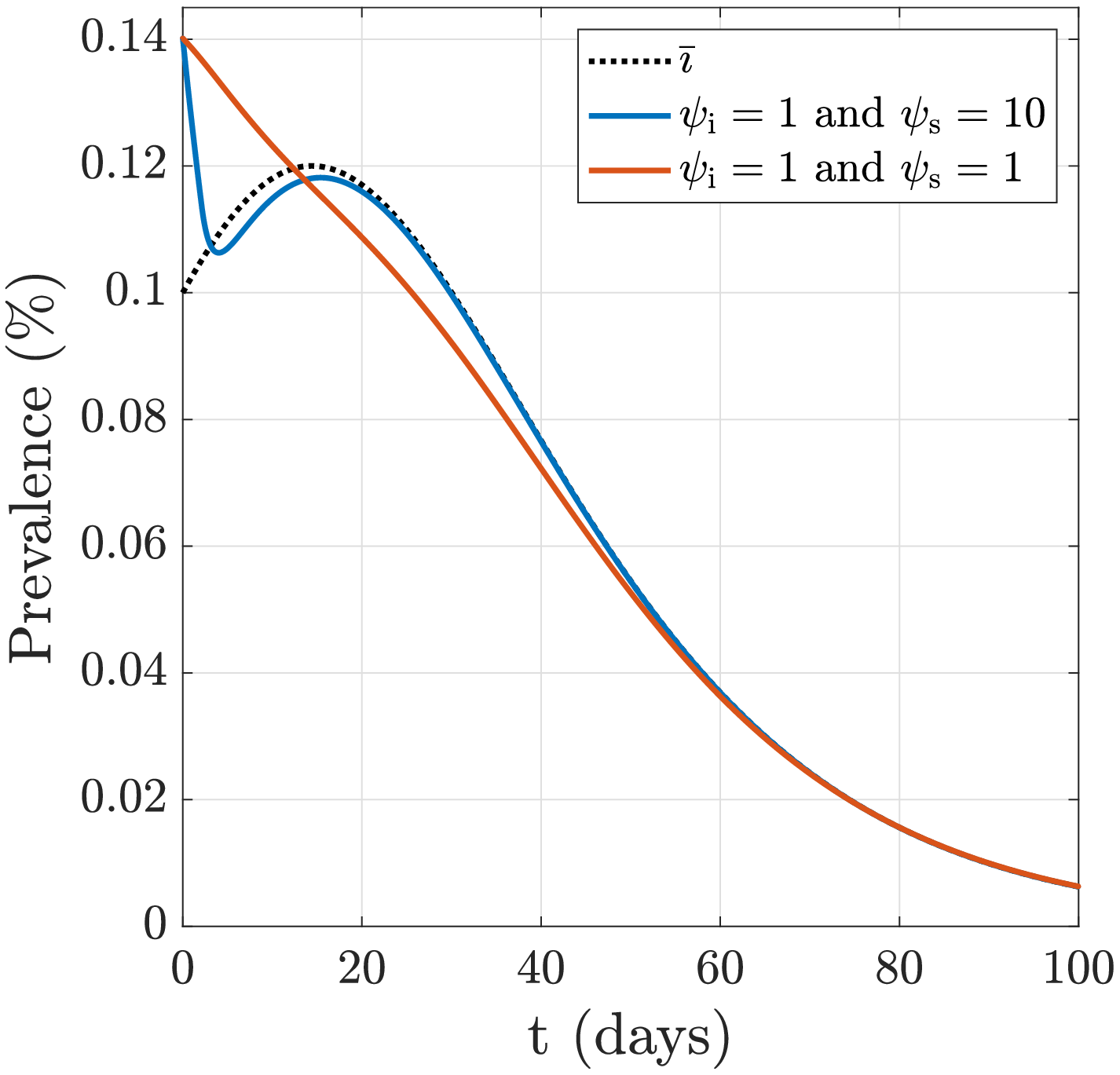}}
    \subfigure[Social Distancing $\beta$]{\includegraphics[height = 0.5\columnwidth,trim={10 0 30 10},clip]{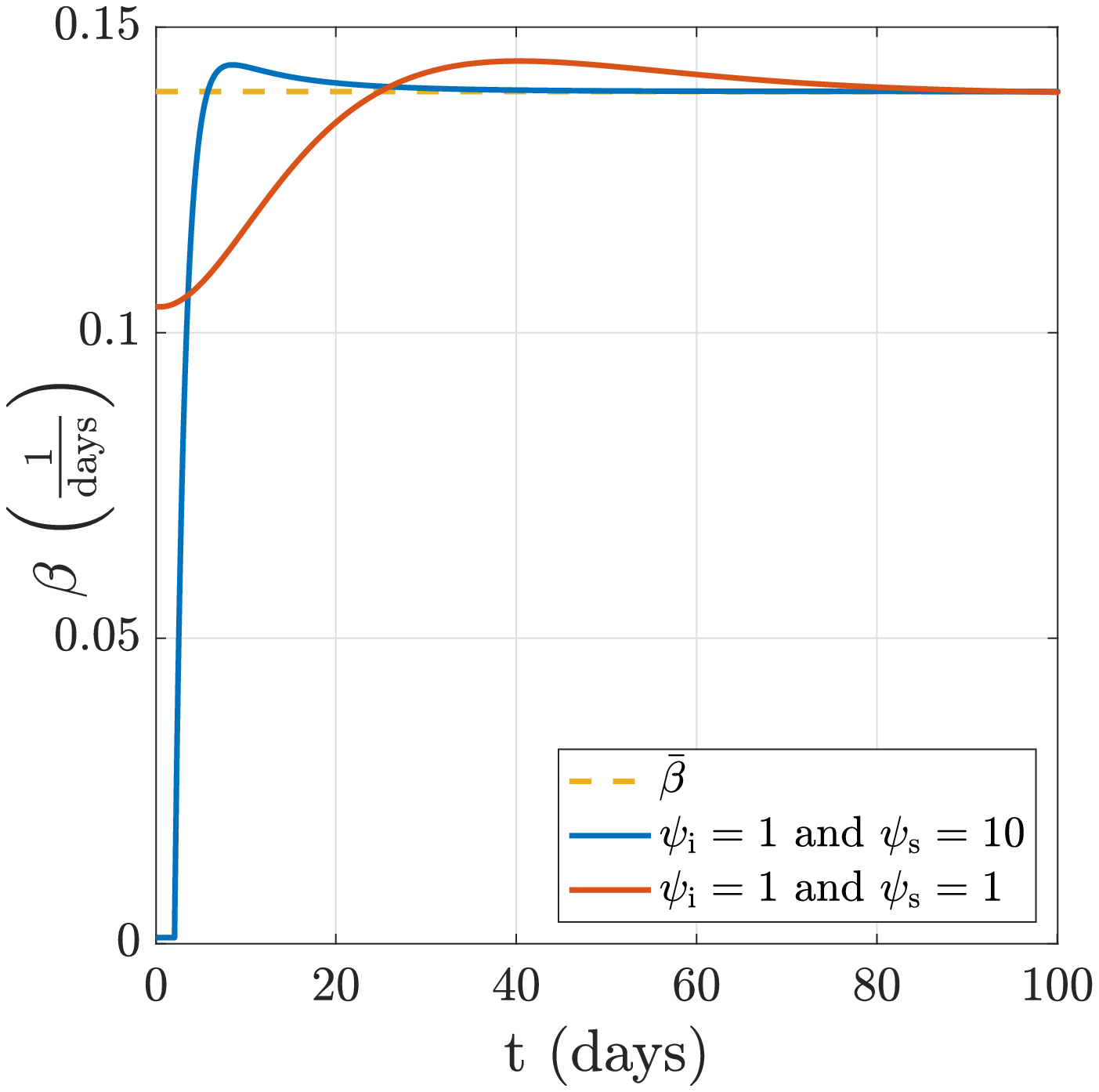}}
    \caption{Two executions of the proposed control architecture when applied to system \eqref{eq:full_dynamics}. Two different choices of control gains $\psi_{\mathrm{i}}$ and $\psi_{\mathrm{s}}$ are considered. The other parameters are $\gamma = 0.1$, $\beta_{\mathrm{max}} = 0.22$, $\bar\imath(0) = 0.1$, $\imath_{\mathrm{th}} = 0.12$, $\imath(0) = 0.14$.
    \cosimo{Susceptibles are not shown for the sake of space.}
    \label{fig:track_traj}
    }
\end{figure}

The following Lemma introduces the tracking controller (``Trajectory Tracking'' in Fig. \ref{fig:cover}) implementing the reactive change of the SD level $\beta$.
Note that 
in principle this controller is agnostic to the choice of the reference to be tracked, and it is introduced as such.

\begin{Lemma}
    The feedback loop composed by the control action
    \begin{equation}\small\label{eq:controller_fb}
    \color{black}
        \beta(s,\imath,t) = 
        + \psi_{\mathrm{i}} (\bar{\imath} - \imath) - \psi_{\mathrm{s}} (\bar{s} - s) +\frac{\bar{s}\,\bar{\imath}}{s \, \imath} \;\bar{\beta}
    \end{equation}
    and the SIR model \eqref{eq:full_dynamics}, is such that \cosimo{$(s,\imath)$ converges exponentially fast to $(\bar{s},\bar{\imath})$,}  
    %
    %
    \cosimo{
    $ \forall  \psi_{\mathrm{i}},\psi_{\mathrm{s}} \in \mathbb{R}, \psi_{\mathrm{s}} > 0, \psi_{\mathrm{i}} \geq 0$,} and if $\bar{s}, \bar{\imath}, \bar{\beta}$ is a solution of \eqref{eq:full_dynamics}.
\end{Lemma}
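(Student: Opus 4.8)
The plan is to recast the claim as exponential stability of the origin for the tracking error $(e_s,e_i):=(\bar s - s,\ \bar\imath - \imath)$, and to exhibit an explicit Lyapunov function whose weights are read directly off the controller gains. First I would substitute the control law \eqref{eq:controller_fb} into \eqref{eq:full_dynamics} and subtract the nominal dynamics satisfied by $(\bar s,\bar\imath,\bar\beta)$. The design is essentially feedback--linearizing: once $\beta$ is multiplied by $\imath s$, the feedforward term $\tfrac{\bar s\,\bar\imath}{s\,\imath}\,\bar\beta\cdot\imath s=\bar\beta\,\bar\imath\,\bar s$ reproduces exactly the nominal nonlinear flux and cancels against $\dot{\bar s}$ in the error, while the two proportional terms survive multiplied by the positive state--dependent gain $g(t):=s(t)\imath(t)$. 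I expect the error dynamics to collapse to
\begin{equation}\small
\dot e_s = g\,(\psi_{\mathrm{i}} e_i - \psi_{\mathrm{s}} e_s),\qquad
\dot e_i = -\gamma e_i - g\,(\psi_{\mathrm{i}} e_i - \psi_{\mathrm{s}} e_s),
\end{equation}
a system that is linear in the error with the single time--varying, nonnegative coefficient $g$.

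Next I would certify convergence with the weighted quadratic $V=\tfrac12(\psi_{\mathrm{s}} e_s^2 + \psi_{\mathrm{i}} e_i^2)$, whose weights are selected precisely to annihilate the indefinite cross term. A direct differentiation along the error flow should yield
\begin{equation}\small
\dot V = -\,g\,(\psi_{\mathrm{s}} e_s - \psi_{\mathrm{i}} e_i)^2 \,-\, \gamma\,\psi_{\mathrm{i}}\, e_i^2,
\end{equation}
which is manifestly non--positive and, for $\gamma>0$, $\psi_{\mathrm{i}}>0$ and $g>0$, negative definite: the associated quadratic form has positive $(1,1)$ entry $g\psi_{\mathrm{s}}^2$ and determinant $g\gamma\psi_{\mathrm{s}}^2\psi_{\mathrm{i}}>0$ by Sylvester's criterion. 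On any horizon where $g=s\imath\ge g_{\mathrm{min}}>0$ this delivers $\dot V\le -c\,V$ for some $c>0$, hence exponential decay of $(e_s,e_i)$; positivity of $s,\imath$ for positive initial data, which is standard for the SIR flow, secures $g>0$ throughout.

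The main obstacle is exactly the qualifier on $g$. Because the state--dependent gain $g=s\imath$ multiplies the $e_s$ channel and $\imath(t)\to 0$ as the outbreak subsides, $g$ is not uniformly bounded below on $[0,\infty)$, so the guaranteed rate degrades near the disease--free limit. I would therefore read the ``exponential'' statement on the active phase of the epidemic, where $g\ge g_{\mathrm{min}}$, and note that it gracefully degenerates to asymptotic convergence in the tail; a uniform bound over all time would require additional assumptions or a time reparametrization $d\tau = g\,dt$ that normalizes the dominant channel.

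A second, minor point is the boundary case $\psi_{\mathrm{i}}=0$ admitted by the hypotheses, where $V$ loses positive definiteness in $e_i$. There I would instead exploit the cascade structure revealed by the error dynamics: $\dot e_s = -g\psi_{\mathrm{s}} e_s$ drives $e_s\to 0$, and the resulting vanishing forcing, together with the dissipative $-\gamma e_i$ term, forces $e_i\to 0$. This recovers the conclusion on the full admissible gain set $\psi_{\mathrm{s}}>0,\ \psi_{\mathrm{i}}\ge 0$.
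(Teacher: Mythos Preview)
Your derivation is correct: the error dynamics
\[
\dot e_s = g(\psi_{\mathrm{i}} e_i - \psi_{\mathrm{s}} e_s),\qquad
\dot e_i = -\gamma e_i - g(\psi_{\mathrm{i}} e_i - \psi_{\mathrm{s}} e_s),
\]
and the Lyapunov computation $\dot V = -g(\psi_{\mathrm{s}} e_s-\psi_{\mathrm{i}} e_i)^2-\gamma\psi_{\mathrm{i}} e_i^2$ check out, and your treatment of the boundary case $\psi_{\mathrm{i}}=0$ via the cascade is sound. The honest remark that the exponential rate is only uniform on sets where $g=s\imath\ge g_{\min}>0$ is a genuine caveat that the paper itself glosses over.

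The paper's route is different. It introduces the scalar coordinate $x=-(\imath+s)/\gamma$, observes $\imath=\dot x$, $s=-\gamma x-\dot x$, and rewrites the SIR model as the single second--order equation $\ddot x=-(\gamma x+\dot x)\dot x\,\beta-\gamma\dot x$. The controller is then recognized as feedback linearization plus a PD term in $(x,\dot x)$, producing the error equation $\ddot e=-(\gamma+\alpha_{\mathrm d}\,s\imath)\dot e-\alpha_{\mathrm p}\,s\imath\,e$, whose exponential stability (for positive, time--varying coefficients) is concluded by citation; finally the gain correspondence $\psi_{\mathrm{s}}=\alpha_{\mathrm p}/\gamma$, $\psi_{\mathrm{i}}=\alpha_{\mathrm d}-\alpha_{\mathrm p}/\gamma$ recovers \eqref{eq:controller_fb}. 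Compared to this, your argument is more elementary and self--contained (no change of coordinates, no external stability lemma, an explicit quadratic certificate), and it makes the role of the state--dependent gain $g=s\imath$ transparent. The paper's transformation, on the other hand, reveals the PD--on--a--second--order--system structure and explains where the controller comes from, which your direct substitution hides. Both arrive at the same error system, since $e=-(e_s+e_i)/\gamma$, $\dot e=e_i$, and share the same implicit dependence on $s\imath$ being bounded away from zero.
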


\begin{proof}

Consider the linear change of coordinates
%
    $x = - (\imath + s)/\gamma$.
%
%
Adding up the two equations in \eqref{eq:full_dynamics}, yields $\dot{\imath} + \dot{s} = - \gamma \imath$. 
%
%
%
We can therefore establish the change of coordinates
\begin{equation}\small\label{eq:invert_change}
    \imath = \dot{x}, \quad s = - \gamma x - \dot{x}.
\end{equation}
Combining the latter, with the second equations in \eqref{eq:full_dynamics} allows writing the following equivalent formulation of the SIR dynamics
\begin{equation}\small\label{eq:total_x}
        \ddot{x} = - (\gamma x + \dot{x}) \dot{x} \beta - \gamma \dot{x}.
\end{equation}
%
%
We take the following control action
\begin{equation}\small\label{eq:control_x}
    \beta(x,\dot{x}) = 
    -\frac{\gamma \dot{\bar{x}} + \ddot{\bar{x}}}{(\gamma x + \dot{x})\dot{x}}
     + \alpha_{\mathrm{p}} ({\bar{x}} - x) + \alpha_{\mathrm{d}}  (\dot{\bar{x}} - \dot{x}),
\end{equation}
with $\alpha_{\mathrm{p}} > \gamma$, $\alpha_{\mathrm{d}} > 0$ being the gains of a PD-like action.
This produces the closed loop dynamics
%
   $ \ddot{e} = -(\gamma + \alpha_{\mathrm{d}} (- \gamma x - \dot{x})\dot{x}) \dot{e} - \alpha_{\mathrm{p}} (-\gamma x - \dot{x})\dot{x} e$, 
%
where $e = \bar{x} - x$. 
By hypothesis \cosimo{$\gamma + \alpha_{\mathrm{d}} (- \gamma x - \dot{x})\dot{x} > 0$ and $\alpha_{\mathrm{p}} (- \gamma x - \dot{x})\dot{x} > 0$}.
Therefore, both $e$ and $\dot{e}$ converge \cosimo{exponentially} to zero~\cite{calzolari2020exponential}, \cosimo{which 
in turn assures that $(s,\imath)$ converges exponentially to $(\bar{s},\bar{\imath})$}.
\cosimo{
We need to show now that \eqref{eq:controller_fb} and \eqref{eq:control_x} are equivalent.
First, 
we use \eqref{eq:total_x} to obtain $\ddot{\bar{x}} = (\gamma \bar{x} + \dot{\bar{x}})\dot{\bar{x}} - \gamma \dot{\bar{x}}$.
We then take $\psi_{\mathrm{s}} = \alpha_{\mathrm{p}}/\gamma$ and $\psi_{\mathrm{i}} = \alpha_{\mathrm{d}} - \alpha_{\mathrm{p}}/\gamma$.
Finally, we combine these three equations with \eqref{eq:invert_change} and \eqref{eq:control_x}. This leads to \eqref{eq:controller_fb}, therefore concluding the proof.
}
%
%
%

\end{proof}

We want our control action to remain limited when acting on a neighborhood of $s \imath = 0$. \cosimo{ Also, it is not  meaningful to act on the system by changing $\beta$ to values smaller than the one associated with total lockdown $\beta_{\mathrm{min}} > 0$, or greater than the one representing no social distancing $\beta_{\mathrm{max}} > \beta_{\mathrm{min}}$. We therefore introduce the following modification on the ideal controller}
\begin{equation}\small\label{eq:actual_control_x}
\color{black}
    \beta(s,\imath,t) = 
        \left[\psi_{\mathrm{i}} (\bar{\imath} - {\imath}) - \psi_{\mathrm{s}} (\bar{s} - {s}) +\frac{\bar{s}\bar{\imath}}{[s \imath]_{\epsilon}^{\infty}}\;\bar{\beta}\right]_{\beta_{\mathrm{min}}}^{\beta_{\mathrm{max}}}\!\!,
\end{equation}
where $\epsilon > 0$ is a small constant, and $[a]_{l}^{u}$ is is equal to $l$ or $u$ if $a < l$ or $a > u$ respectively, and equal to $a$ otherwise. 
%
%
\begin{figure}
	\centering
	\includegraphics[width=0.9\columnwidth]{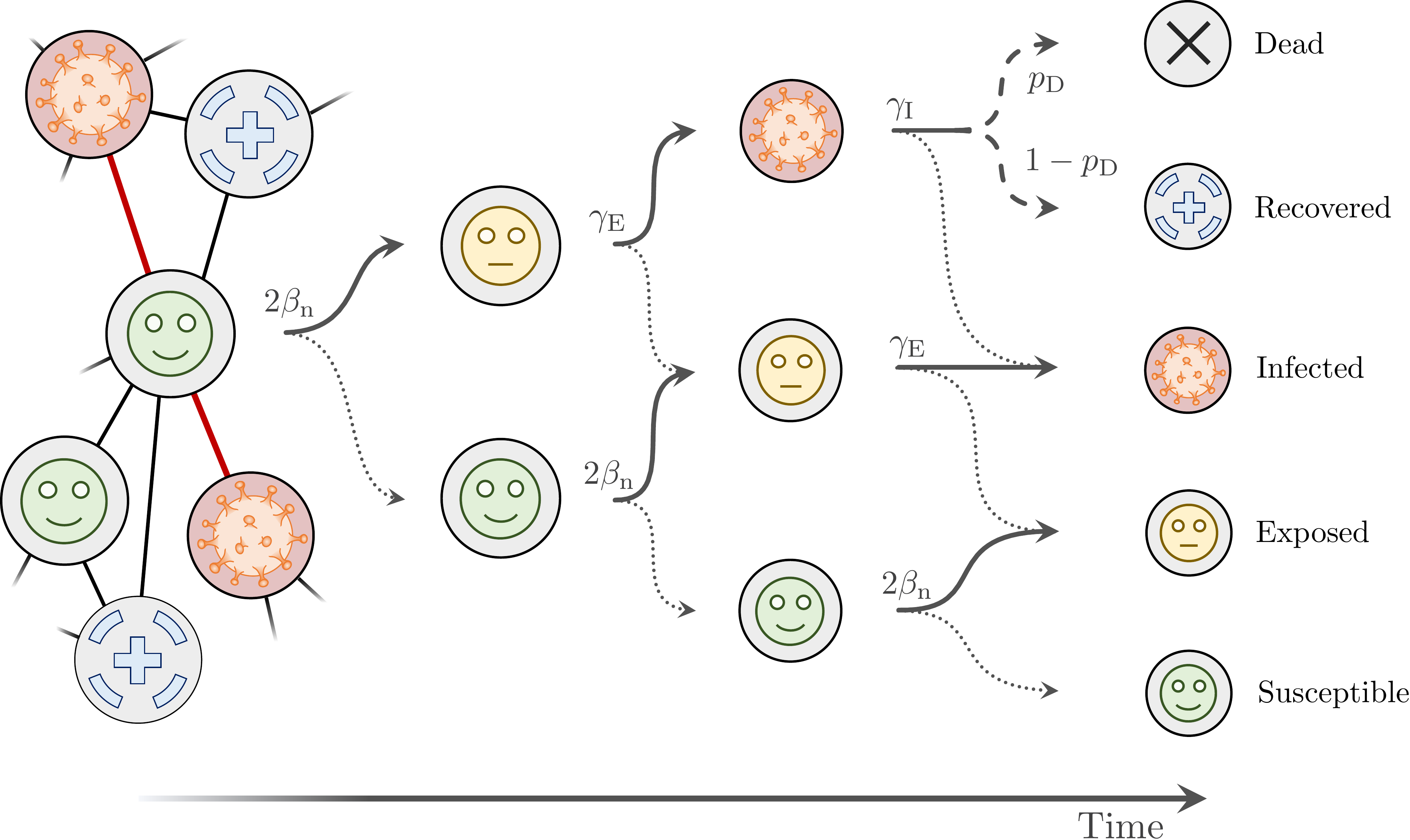}
	\caption{
	    \cosimo{Pictorial representation of SEIRD dynamics on a network. The  process is a continuous-time Markov chain. Each infected (and infectious) node spreads the disease to its susceptible neighbors at rate $\beta_{\mathrm{n}}$ until  no longer infectious. A node that has been successfully infected,  becomes first exposed, then infectious itself. Its ultimate destiny is either dying (with probability $p_{\mathrm{D}}$), or fully recovering (with probability $1-p_{\mathrm{D}}$). The rate of each event is given on the continuous arrows.}
		\label{fig:seird}
	}
\end{figure}
Fig. \ref{fig:track_traj} reports two examples of application of the algorithm to the SIR model \eqref{eq:full_dynamics}.
%


\section{Network Control}\label{sec:network}

\subsection{Network Model}

\fra{We implement two important features in a refined model:   (i) people interact through heterogeneous contact structures, i.e. the population is not well-mixed, and (ii) real epidemics have an intrinsic degree of stochasticity, so they cannot be exactly described by  \eqref{eq:full_dynamics}. We therefore consider stochastic epidemics on networks~\cite{pastor2015epidemic,Kiss2017}. A network is a pair $(V,E)$, where $V$ is a set of $N$ nodes (or vertices), and $E$ is a set of edges (or links) connecting nodes, i.e. tuples $\{u,v\}$, where $u,v \in V$. A population contact structure is modelled by a network in which nodes represent individuals, and  links are associated with routes of disease transmission between individuals. We consider undirected networks, such that $\{u,v\} \in E \iff \{v,u\} \in E$. Figs. \ref{fig:cover}, \ref{fig:seird}, \ref{fig:network_model}, show pictorial representations of networks.
Here, we focus on a particular well-known class of random networks, i.e.  Erd\H{o}s-R\'enyi~\cite{graph2001}, generated as follows: start with $N$ isolated nodes, consider each unique pair of two distinct nodes and connect them with probability $0\leq p \leq 1$. Hence, the probability of a node  having $k$ neighbors follows a binomial distribution $\mathcal{B}(N-1,p)$, $E(k)=p(N-1)$ being the average degree. Such networks may be considered a very first order approximation of realistic contact structures, as they display sufficient  heterogeneity and are easy to implement~\cite{Kiss2017}.
}

\subsection{Epidemic model on Network}

\fra{
We consider a SEIRD model for disease spreading, in which, at any time, each node has to be in one of five states  representing its status with respect to the disease: susceptible ($S$), exposed ($E$), infected/infectious  ($I$),  recovered ($R$) or deceased ($D$).
Fig.~\ref{fig:seird} illustrates  the possible transitions of a susceptible node that is in contact with two infectious neighbors.
Compared to a SIR model (see Sec. \ref{Sec:background}), we add an exposed class to account for individuals who have been infected but are  not yet infectious (biologically known as incubation phase).  We also allow for infected individuals to either survive or die.
Outbreaks are modeled as Markovian processes on the generated network, in which a node $I$ infects, via links, its $S$ neighbors  at a constant rate $\beta_{\mathrm{n}}$,  turning them in $E$. At a constant rate $\gamma_{E}$, an $E$ node becomes $I$. $I$ nodes stop being infectious independently at a constant rate $\gamma_{I}$, after which they have two possibilities: either they fully recover ($R$), with probability $1-p_{\mathrm{D}}$, or they die with probability $p_{\mathrm{D}}$ ($D$). Nodes in state $R$ and $D$ play no further role in the dynamics.  Further, $p_{\mathrm{D}}$ depends on the prevalence of the disease, to model  increased mortality in case of saturation of the health care system. Control interventions in this model are implemented as changes in the value of   $\beta_{\mathrm{n}}$. At time $t=0$, $I(0) = N \imath(0) \ll N$ randomly chosen nodes are infected. The remaining ones are initialized as susceptible. 
We use a Gillespie algorithm~\cite{Gillespie1977} adapted to networks~\cite{Kiss2017} to simulate this process. In Fig~\ref{fig:network_model} we show a realization of an outbreak on a network of modest size, to highlight how the topology impacts the dynamics.}
\subsection{Input and Output Maps}
\fra{ To connect the controller to the network model, we introduce two maps, as shown in Fig. \ref{fig:cover}. 
Such mappings are  general, and they could be used in conjunction with different control techniques relying on similar input\--output description of the pandemic.
The output map extracts $s$ and $\imath$ from the full state of the network by counting as $s$ the fraction of nodes either $S$ or $E$, and as $\imath$ the fraction of $I$. The input map provides expressions for the control input on the network level $\beta_{\mathrm{n}}$ given the output of the controller $\beta(s,\imath,t)$.
}
\fra{ 
With the aim of evaluating the input map, we turn to the adaptation of $\beta$ to networks}. From~\eqref{eq:full_dynamics} we get
\begin{equation}\small
    N\dot{\imath} = \beta N \imath s - \gamma N\imath \Rightarrow \dot{I} = \beta I \frac{S}{N} - \gamma I.
\end{equation}
The term $\beta I S / N$ represents the total infectious pressure in the ODE model. 
This quantity drives the whole infectious process, and it is crucial that the map preserves it. On the network, the infectious pressure is given by the infectious pressure $\beta_\mathrm{n}$  times the number of links between infected and susceptible nodes, which is a random variable that depends on which nodes are infected/recovered and on the topology of the network. \cosimo{Therefore, implementing an exact mapping would require to impose a different SD level on each individual, depending on the degree of its social interactions.
Although well defined in theory, this is clearly not  implementable in practice.
}
To overcome this issue, we introduce the so-called mean-field approximation~\fra{\cite{pastor2015epidemic,Kiss2017}}. On average, an infected node is connected to $E\left[k\right]$ neighbors, of which we assume that a proportion ${S}/{N}$ is susceptible. Hence, we set the number of $S-I$ links as $E\left[k\right] I {S}/{N}$. We derive $\beta_{\mathrm{n}}$ as a simple linear function of $\beta$
\begin{equation}\small
    \beta_{\mathrm{n}} I(t) E\left[k\right] \frac{S(t)}{N} \simeq \frac{\beta}{N} I(t) S(t) \Rightarrow \beta_{\mathrm{n}} \simeq \frac{\beta}{E\left[k\right]}.    \label{eq:controlmap}
\end{equation}
\fra{
This is a valid first-order approximation, that is known to give an upper estimate of the true $S-I$ link count~(see \cite{Kiss2017,pastor2015epidemic}), which in our case can only translate in a more conservative control strategy.}
\fra{ This expression connects a SIR model~\eqref{eq:full_dynamics} to a stochastic SIR on networks, rather than a stochastic SEIRD, as we want. Hence, we need to add an additional layer that conciliates $\gamma_{\mathrm{E}}$ and $\gamma_{\mathrm{I}}$ with $\gamma$ in the SIR model. To do so, we first consider  the time to full recovery (or death) of an individual who has been infected in a SEIRD model. This is a random variable exponentially distributed with rate $\frac{\gamma_{\mathrm{E}}\gamma_{\mathrm{I}}}{\gamma_{\mathrm{I}}+\gamma_{\mathrm{E}}}$. We set the controller $\gamma$ to this value. To find the infection rate, we use the definition of $R_0$~\cite{Kiss2017} for both models, i.e. $R_0 = \frac{\beta}{\gamma}$ for the SIR, and $R_0 = \frac{\tilde{\beta}}{\gamma_{\mathrm{I}}}$ for SEIRD (we momentarily use $\tilde{\beta}$ to distinguish it from the $\beta$ in the SIR), and we impose that they are equal. This yields
$\tilde{\beta} = \beta \frac{\gamma_{\mathrm{I}} + \gamma_{\mathrm{E}}}{\gamma_{\mathrm{E}}}$. Finally, combining this expression with \eqref{eq:controlmap}, gives $\beta_{\mathrm{n}}$ as}
\begin{equation}\small
\color{black}
    \beta_{\mathrm{n}} = \frac{\tilde{\beta}}{E\left[k\right]} =   \frac{\beta}{\gamma_{\mathrm{E}}} \frac{\gamma_{\mathrm{I}}+\gamma_{\mathrm{E}}}{ E\left[k\right]}.
    \label{eq:map}
\end{equation}

%

\begin{figure}[t]
    \centering
    \includegraphics[trim={2cm 3cm 3.5cm 1cm},clip,width=0.49\columnwidth]{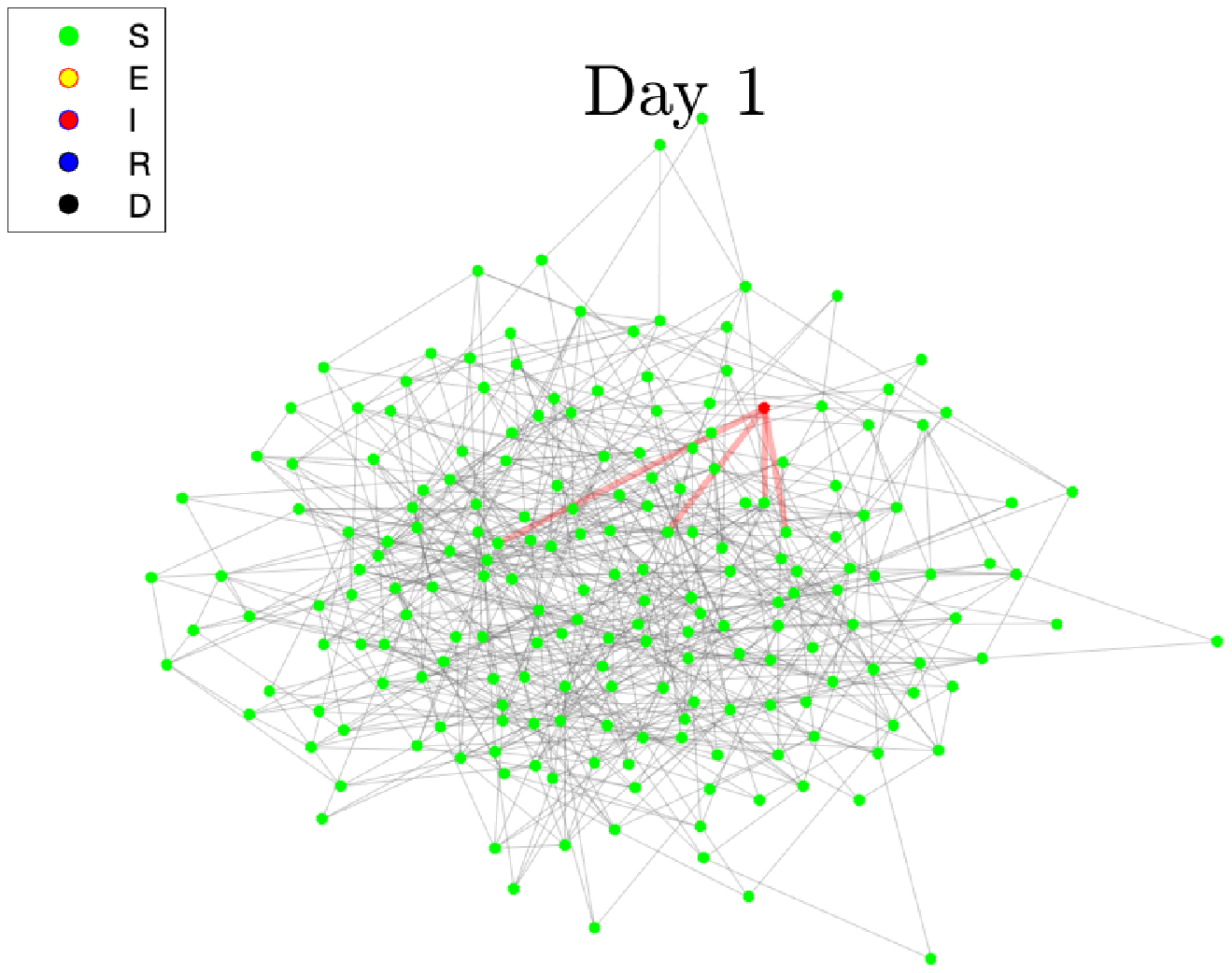}
    \includegraphics[trim={2cm 3cm 3.5cm 1cm},clip,width=0.49\columnwidth]{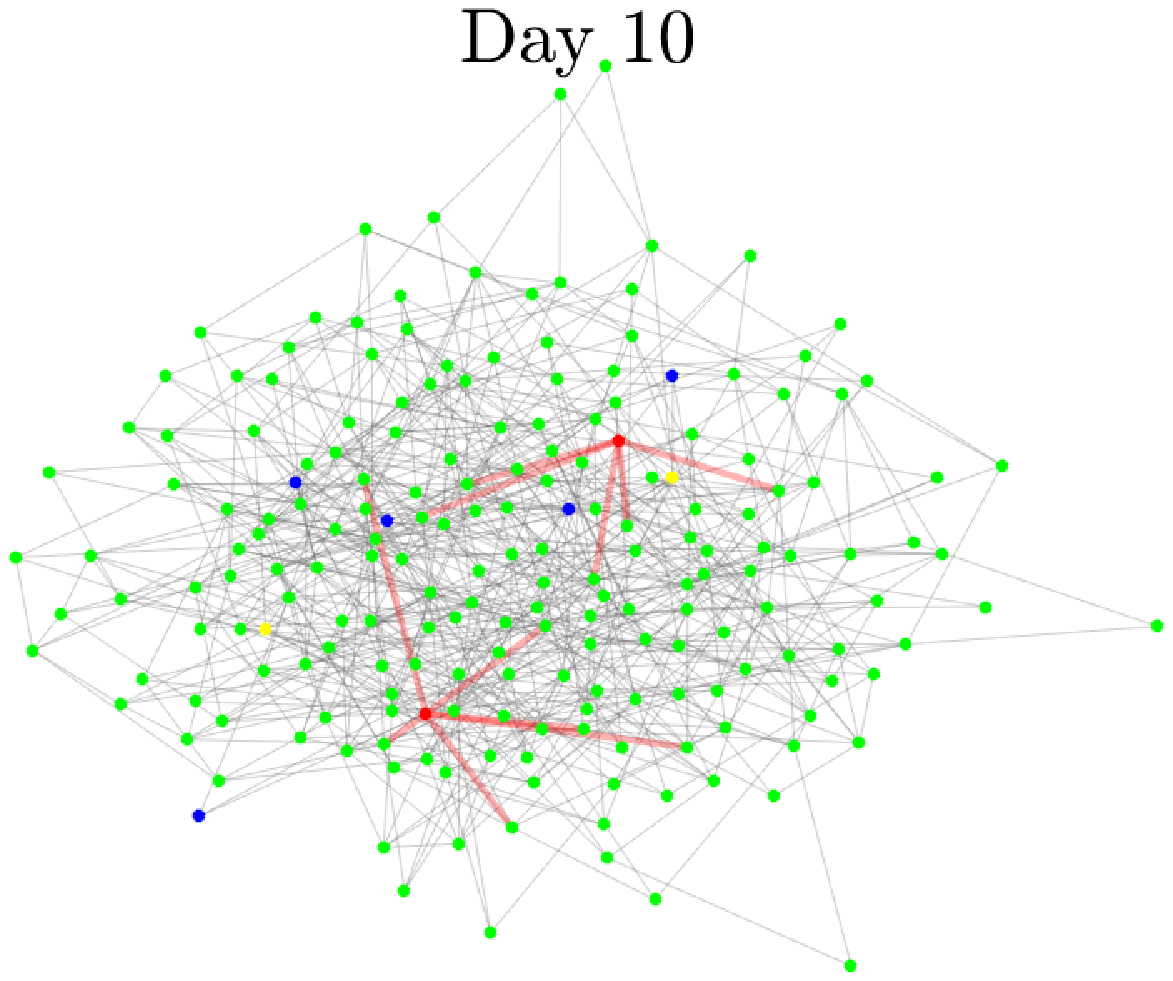}
    \includegraphics[trim={2cm 3cm 3.5cm 1cm},clip,width=0.49\columnwidth]{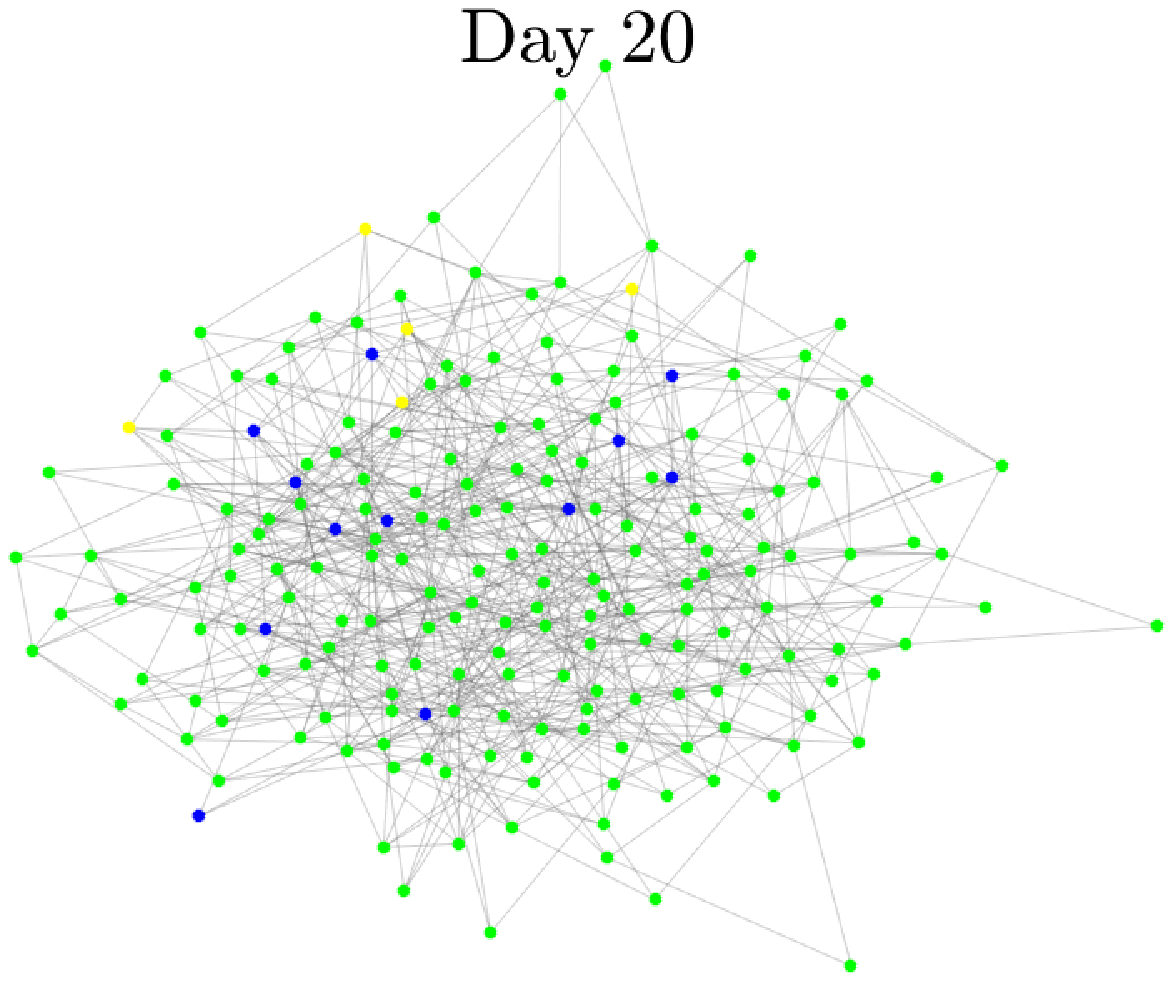}
    \includegraphics[trim={2cm 3cm 3.5cm 1cm},clip,width=0.49\columnwidth]{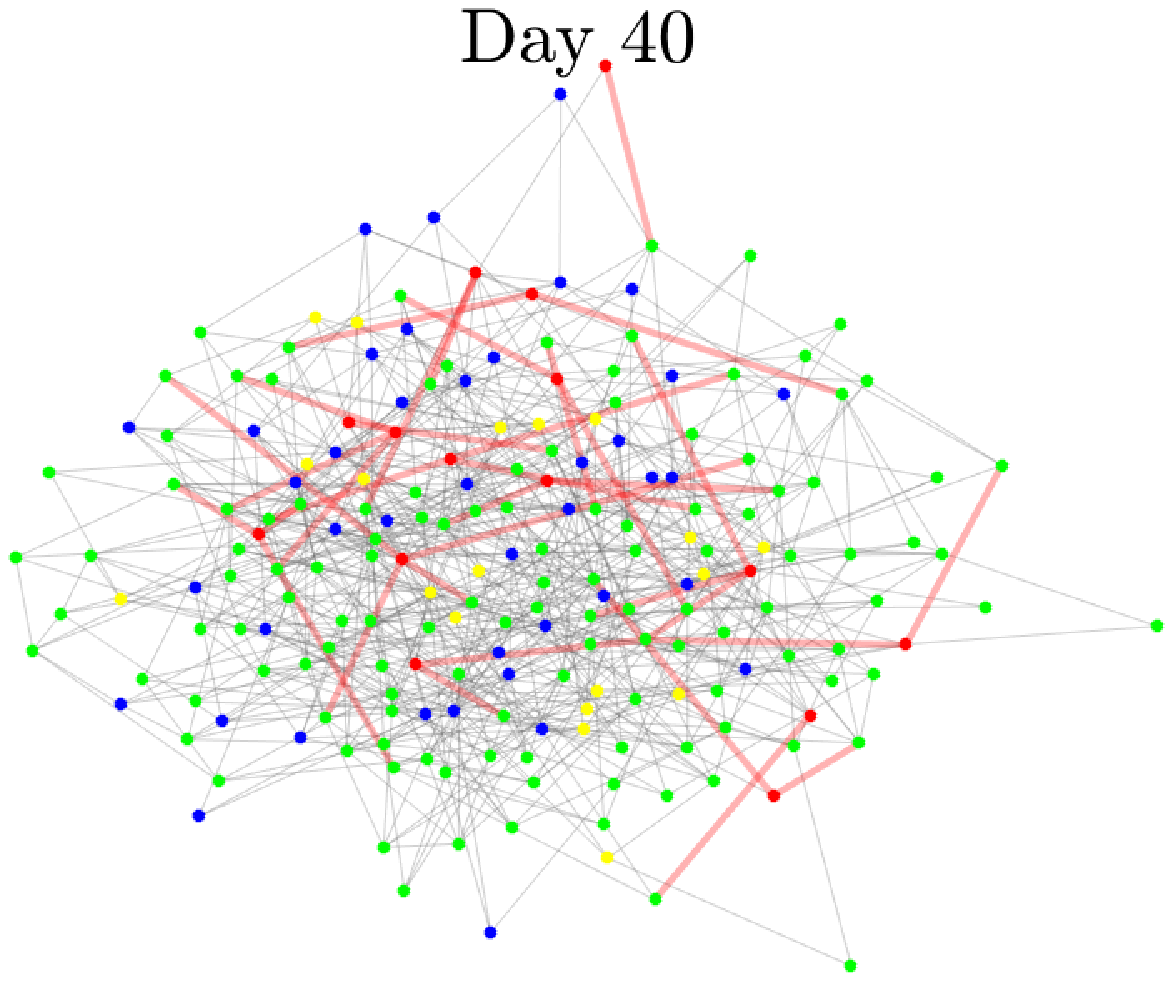}
    \caption{\fra{Simulation of a SEIRD  outbreak on a Erd\H{o}s-R\'enyi network of size $200$, with average degree $E\left[k\right] = 7$. A single node at day $1$ spreads infection to its neighbors (red edges), which in turn become first exposed, then infected, and eventually recover or die. The network is drawn in such a way that nodes with fewer links are on the periphery. The effect of the topology on the disease is particularly evident on such nodes, as only a few of them gets infected compared to  central ones.} 
    }
    \label{fig:network_model}
\end{figure}
%
%

\section{Simulations} \label{sec:validation}
%

%
\cosimo{
On top of the complexity introduced by the network dynamics, we consider several non\--ideal behaviors to better approximate a real\--world scenario. Note that none of these effects are considered in the controller design, and therefore are to be seen as uncertainties.
\begin{itemize}
    \item Unknown random delay affects measurements, which changes every time the controller is executed. This models the difficulties in getting on-line estimates of prevalence through daily swab tests.
    %
    \item  Policy update is allowed at a fixed rate, to mimic real life scenarios in which policy makers are reluctant to apply different degrees of restrictions too frequently.
    \item Quantization of the possible levels of $\beta$. Policy makers can realistically implement only limited control actions.
    We use $5$ distinct, equally spaced, levels, from $\beta_{\mathrm{min}}>0$ to $\beta_{\mathrm{n}}$. We set $\beta_{\mathrm{min}} = 0.25 \beta$. 
    This is based on 
    the analysis of Italian mobility data \cite{pepe2020covid}. 
    \item We introduce measurement noise of the signal, proportional to its value, to model uncertainty in the estimation of the prevalence when the epidemic is out of control.
\end{itemize}
}

\begin{table}
{
\centering
\caption{Parameters used for simulations in Sec.~\ref{sec:validation}.}
\label{table:params}
\begin{tabular}{ |l |c || l|  c| }
\hline
 $\beta_{\mathrm{n}}$ & $0.0227$ & $\beta_{\mathrm{min}}$  & $0.0057$\\
 \hline
 $\gamma_{\mathrm{E}}$ & $0.25$ & delay (days) & $\mathcal{N}(\{3,7,20\},1)$  \\ 
 \hline
 $\gamma_{\mathrm{I}}$ & $0.1428$ & noise (signal) &  $\mathcal{N}(0,0.1)$ \\ \hline
 $N$ & $16000$ & Hospitalization rate & $0.02$ \\
 \hline
 
 $T_f$ (days) & $240$ &  $\imath_{th}$ (\%) & $0.025$ \\
 \hline 
  $I_0$ & $800$ & $p_{\mathrm{D}}$ if  $ \imath \leq \imath_{th}$ & $0.005$\\
 \hline
   $S_0$ & $15200$ & $p_{\mathrm{D}}$ if $ \imath \leq \imath_{th}$  & $0.02$\\
\hline
  $\mathrm{E}[k]$ & $19$  & policy update  (days) & $\{1,7,15\}$ \\
 \hline
 
\end{tabular}
}
\end{table}
\cosimo{For the tuning of the model parameters we consider the case of Codogno, which}
has been the first city  in Lombardy with a diagnosed case of Covid-19. \fra{We have used Google data for the number of people in Codogno and the hospital capacity. We considered realistic parameters for incubation period~\cite{Ling2020}, infectious period~\cite{Ling2020}, hospitalization rate
~\cite{oms2020,jama2020}, infection fatality rate~\cite{Meye2020}, and social network connectivity~\cite{Melegaro2011}. All the parameters are reported in Table~\ref{table:params}. The initial condition is set to $I_0 = 800$, to model a delayed recognition of the presence of the disease, and simulations are run for $T_f = 240$ days.}

\fra{
Figs. \ref{fig:codogno_infected} and \ref{fig:codogno_beta} show the evolution of infected $\imath$, deaths $D/N$, and prescribed SD $\beta$, for the case where policy can change once a week and delay between testing and results is on average $4$ days. 
We report the results when using the proposed feedback action $\beta(t, s,\imath)$ and, as comparison, the evolution of the uncontrolled epidemics ($\beta = \beta_{\mathrm{max}}$) and of a one on-off intervention lasting for $60$ days, during which $\beta = \beta_{\mathrm{min}}$.} 
Susceptible percentages $s$ are not shown for the sake of space. 
We aggregate results from $100$ simulations, each one run on a different network realization.
%
%

\begin{figure}[t]
    \centering
    \subfigure[Infected $\imath$]{\includegraphics[width=0.95\columnwidth,trim= {0.8 0cm 1cm 0}, clip]{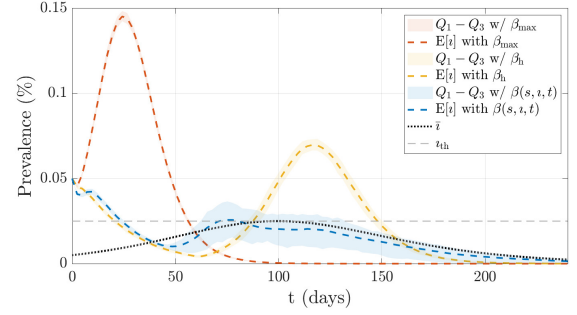}}
    \subfigure[Deaths $D/N$]{\includegraphics[width=0.95\columnwidth,trim= {0.8 0cm 1cm 0}, clip]{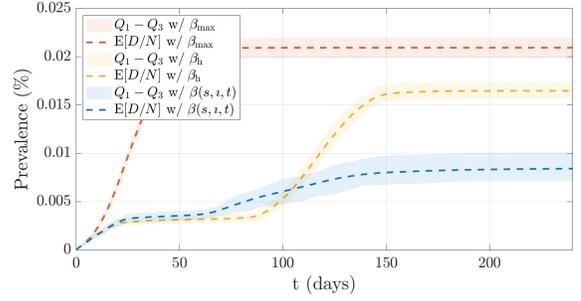}}
    %
    %
    %
    \caption{Prevalence of  infected and dead nodes for the considered simulation scenario. It is shown here the case in which the policy changes only once every week, and the average delay in measurements is set to $3$ days. All the other values are as in Tab. \ref{table:params}. } 
    \label{fig:codogno_infected}
    \hspace{-2cm}
\end{figure}
\begin{figure}[t]
\centering
    \includegraphics[width=0.95\columnwidth,trim= {0.8cm 0cm 1cm 0}, clip]{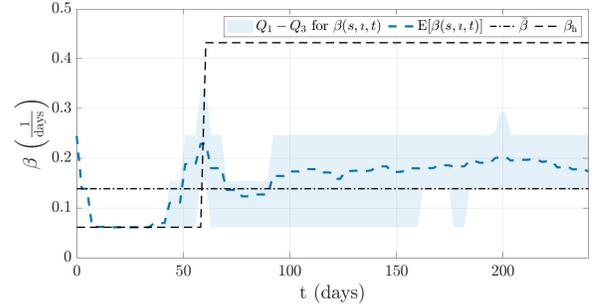}
    %
    %
    %
    \caption{
    \cosimo{
    Level of SD $\beta$ as a function of time. The average output of the controller across 100 simulations, when policy changes every $7$ days and delays in data are $3$ days, is shown together with its lower and upper quartiles (Q1-Q3).
    We also report for comparison a $60$ days full\--lockdown strategy, and the feedforward action $\bar\beta$.
    }
    }
    \label{fig:codogno_beta}
\end{figure}

\fra{We evaluate the performances of the controller in various settings, in which we act on two main parameters, namely, the delay in knowledge of the status and the frequency at which the control policy can be changed. The former one can take values of $\{3,7,20\}$ days, while the latter moves between $\{1,7, 15\}$ days. We consider all the possible combinations of these parameters. }
%
%
We cannot report here the complete results of our simulations, for the sake of space.
We report instead some relevant performance indexes \fra{in Fig.~\ref{fig:heat_maps} - namely the reduction in social distancing compared to $60$ days full lockdown, and reduction in deaths with respect to not applying any strategy. We observe 
that the controller performs well on average even in the most extreme cases. Yet, we observe increased dispersion as we increase delays and reaction times. The use of the controller consistently induces a reduction of over $32\%$ of deaths in the worst case, and, in the best tested case, of $63\%$.}

\section{Discussion}

Our approach resulted in a strategy able to keep the curve below the health care capacity when uncertainty is low, with increased variability when delays and other inaccuracies in measuring become important.
\fra{
%
%
From this analysis, it appears clear that is crucial to have a reliable estimate of the current prevalence of the disease. This is of course the downside of closed loop strategies, i.e. that the controller becomes less reliable as the quality of measurements deteriorates. Instead, it is worth noting that, given low delays in data, updating policies every $15$ days has a limited impact on the performance of the controller. Interestingly, increasing delays (or control updates frequency), does not have a major impact on the average performance of simulations, in terms of reduction of mortality. However, this result might be misleading, because the variance between different realisations gets higher as the delay increases, meaning that the controller becomes unreliable if applied to an individual realisation. This suggests that the crucial quantity  for control is on-line prevalence estimation.
Therefore, this analysis confirms that, when implementing control policies based on daily testing data, policy makers should ideally have access to the exact state of the system. Clearly, this is far from being a realistic assumption. Still, our results prove that periodic loop closure can still be a viable solution also in a more realistic setting - although we are in no position to claim any definitive result in this direction. 
}

At the same time, we observe a relevant outcome in all our simulations, namely that when control acts on an outbreak that has already reached a significant proportion of the population, the advisable strategy is to go into full lockdown until the epidemic curve is brought down to acceptable levels, and then to gradually relax and adjust  control measures, according to the estimated prevalence. 

	%

\begin{figure}[t]
    \centering
    \subfigure[Reduction in $\beta$, for $\imath(0) \neq \bar{\imath}(0)$]{\includegraphics[width = 0.485\columnwidth, trim = {0 1.25cm 0 0}, clip]{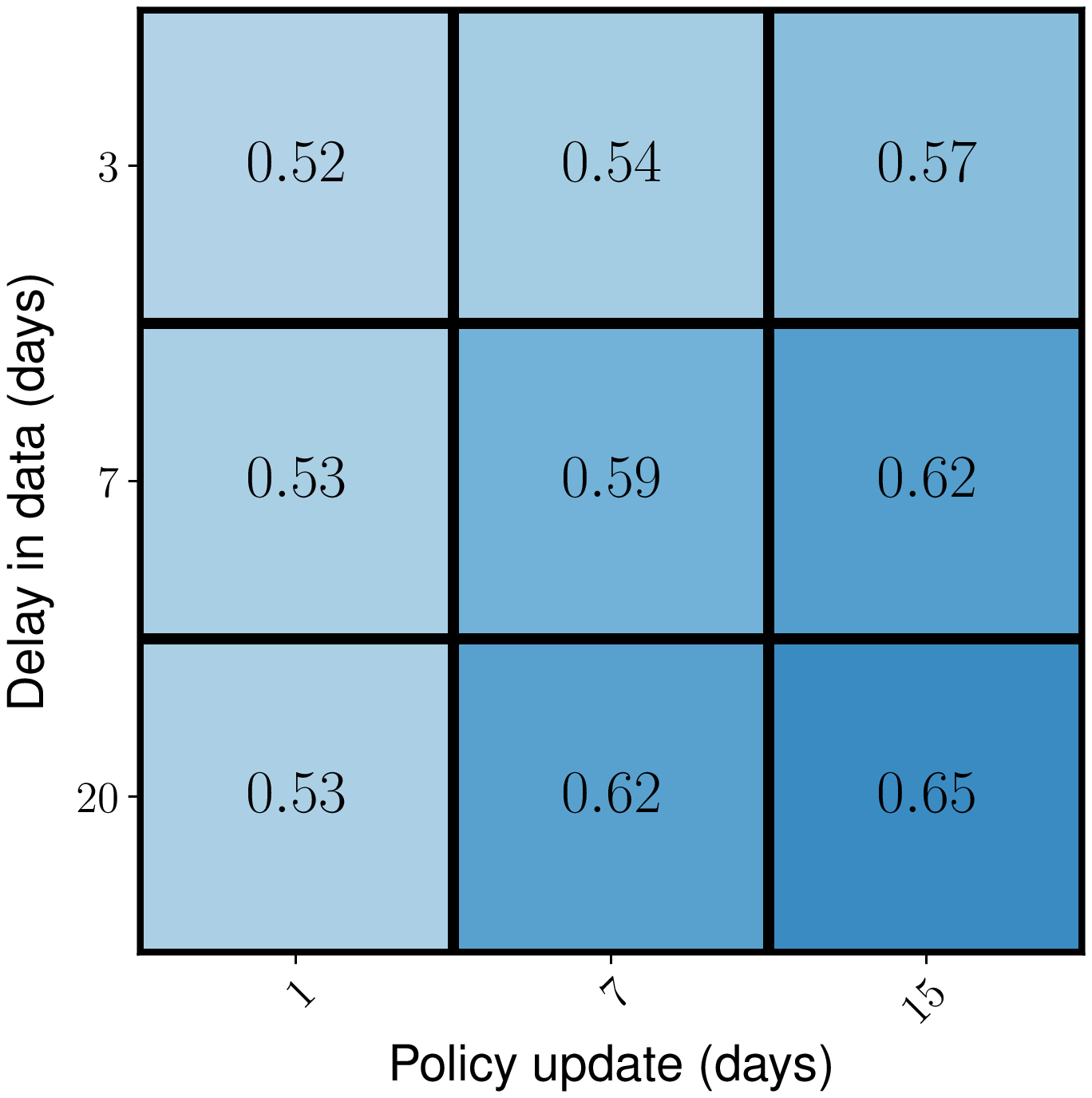}}
    \hspace{0.0025\columnwidth}
    \subfigure[Reduction in $D$, for $\imath(0) \neq \bar{\imath}(0)$]{\includegraphics[width = 0.485\columnwidth, trim = {0 1.25cm 0 0}, clip]{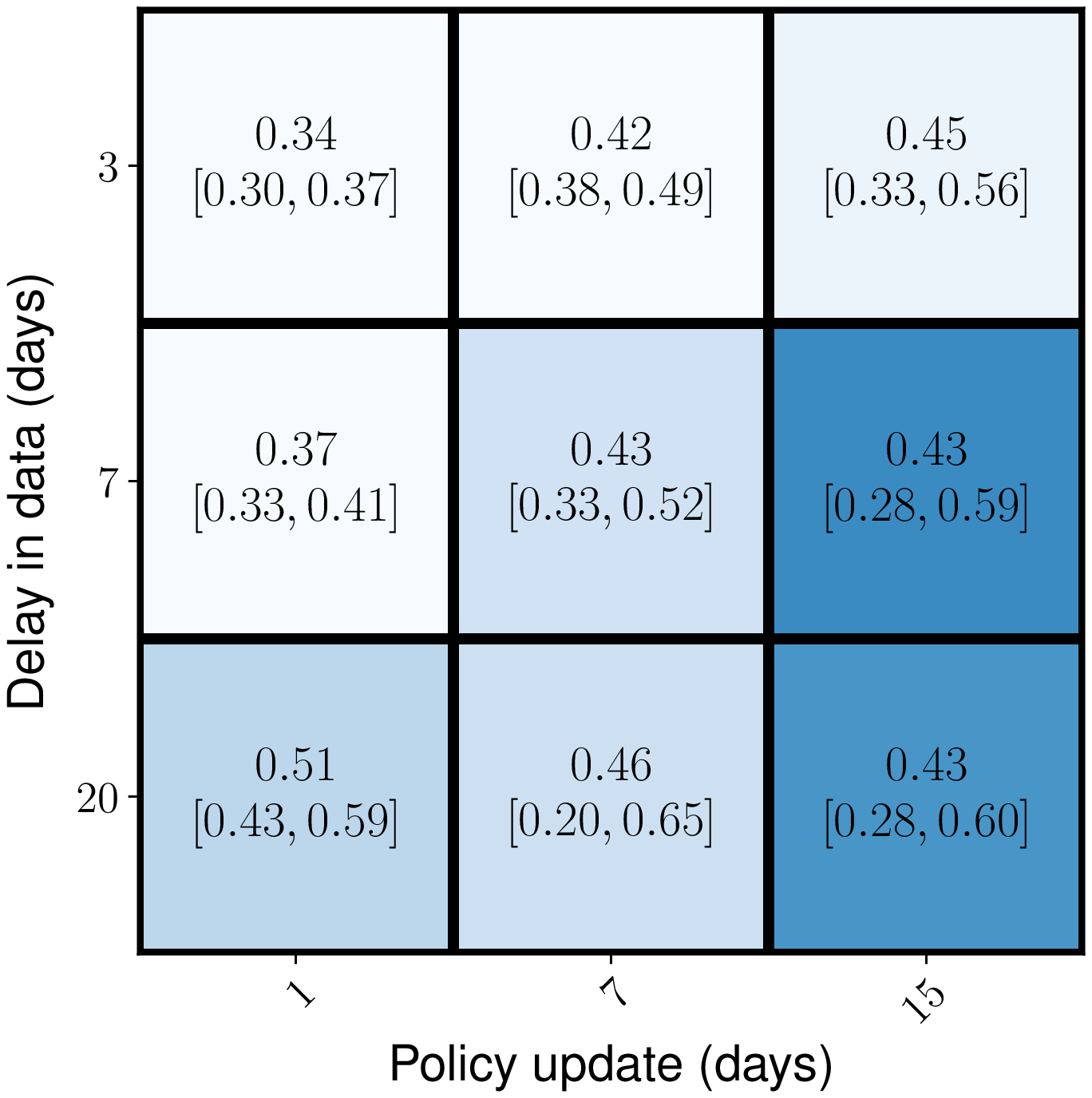}}
    \subfigure[Reduction in $\beta$, for $\imath(0) = \bar{\imath}(0)$]{\includegraphics[width = 0.485\columnwidth, trim = {0 1.25cm 0 0}, clip]{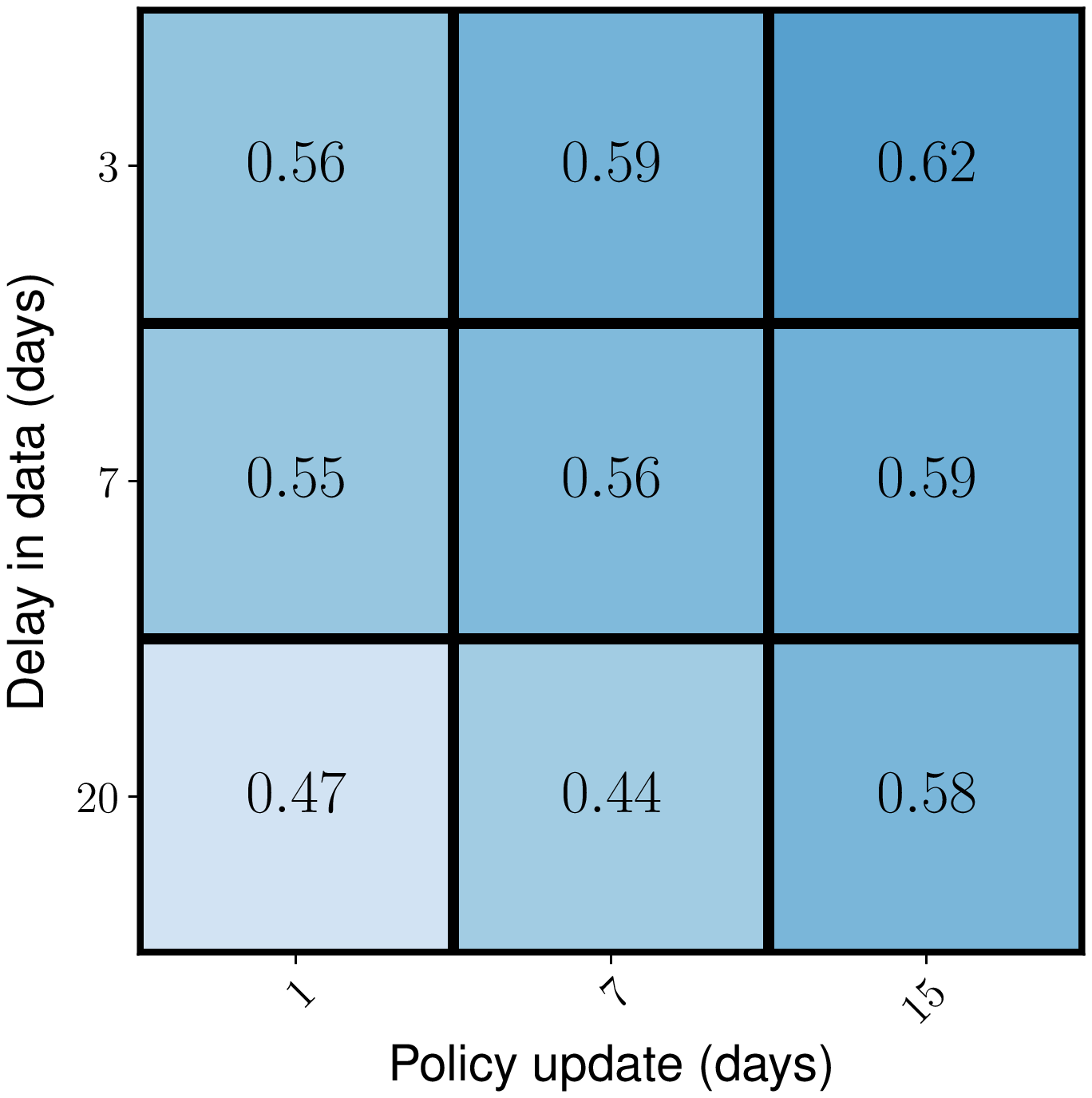}}
    \hspace{0.0025\columnwidth}
    \subfigure[Reduction in $D$, for $\imath(0) = \bar{\imath}(0)$]{\includegraphics[width = 0.485\columnwidth, trim = {0 1.25cm 0 0}, clip]{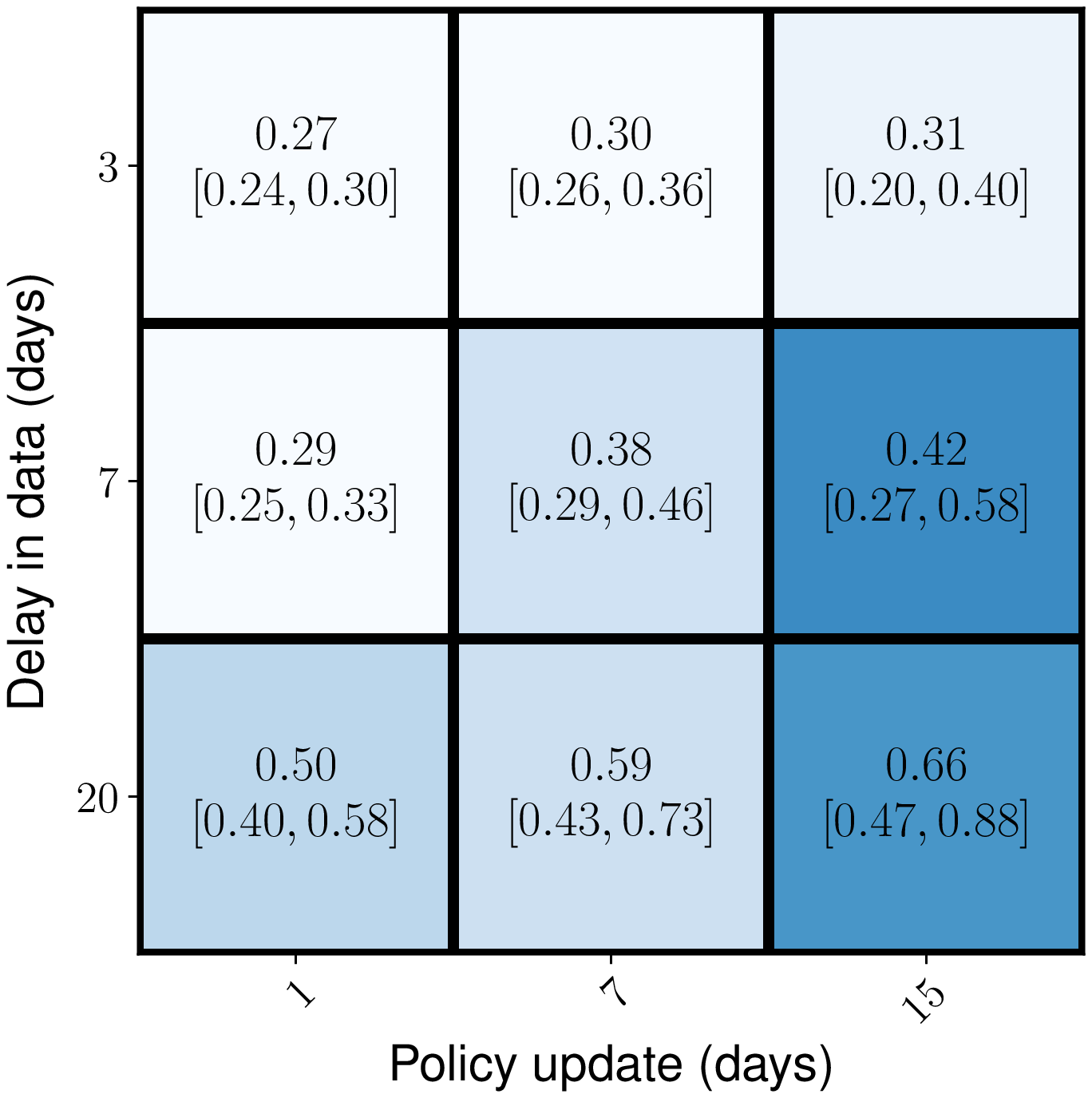}}
    \caption{\fra{Heat maps reporting (a) the average reduction in $\beta$, normalised by $\int_0^{T_f} \beta_{\mathrm{h}}(t) \mathrm{d}t$ in the reference scenario, and (b) the average reduction in deaths, normalised by the average number of deaths in the reference scenario, across different conditions.  Colors in (b) follow the width of the fist and third percentile (reported in the cells under the average).}
    \cosimo{Both the indices are defined so that the smaller the better.}
    }
    \label{fig:heat_maps}
\end{figure}

\section{Conclusions and Future Work}

This preliminary work showed that a simple feedback action can improve the robustness and the effectiveness of an optimal policy for epidemic control, even in presence of quite non ideal behaviors in the system and in  measuraments.
\cosimo{The effectiveness of strategies based on control for dealing with epidemics is still an open topic, with respected academics having opposite positions~\cite{casella2020can,nowzari2016analysis}. We do not aim here to give a final solution to the problem. On the contrary, we want to give our perspective to this important discussion by providing a new piece to this intricate puzzle.}
%
Future work will be devoted to use more reliable input maps (and possibly theoretical models for the controller), improve control design with robust and adaptive techniques, include other sources of lags and uncertainties, use more realistic network models - possibly dynamic networks, \fra{the ultimate goal being engineering a sound model that could be useful when it comes to decision making for  governments.} 

\bibliographystyle{ieeetr}%
\bibliography{biblio}

\end{document}